\newcommand{\coef}[2]{C_{#1}(#2)}
\newtheorem{theorem}{Theorem}
\newcommand{\R}{\mathbb{R}}
\newcommand{\C}{\mathbb{C}}
\theoremstyle{definition}
\newtheorem{definition}{Definition}[section]
\newtheorem{lemma}[]{Lemma}
\theoremstyle{plain}
\newtheorem{corollary}{Corollary}
\begin{document}
\begin{titlepage}
\title{%
A Non-iterative Parallelizable Eigenbasis Algorithm for Johnson Graphs \\
}

\author{Jackson Abascal\footnote{Department of Computer Science and Mathematics, University of Rochester, Rochester NY, 14627} \\ \url{jabascal@u.rochester.edu} 
\and
Amadou Bah\footnote{Department of Electrical Engineering and Computer Science, Massachusetts Institute of Technology, Cambridge MA, 98926} \\ \url{abah@mit.edu}
\and
Mario Banuelos\footnote{Department of Mathematics, California State University, Fresno, Fresno CA, 93740} \\ \url{mbanuelos22@csufresno.edu}
\and
David Uminsky\footnote{University of San Francisco, San Francisco CA, 94117} \\ \url{duminsky@usfca.edu}
\and
Olivia Vasquez\footnote{Department of Mathematics, Central Washington University, Ellensburg WA, 98926} \\ \url{vasquezol@cwu.edu} }
\date{\today.}

\maketitle
\begin{abstract}
We present a new $O\left(k^2\binom{n}{k}^2\right)$ method for generating an orthogonal basis of eigenvectors for the Johnson graph $J(n,k)$.
Unlike standard methods for computing a full eigenbasis of sparse symmetric matrices, the algorithm presented here is non-iterative, and produces exact results under an infinite-precision computation model.
In addition, our method is highly parallelizable; given access to unlimited parallel processors, the eigenbasis can be constructed in only $O(n)$ time given $n$ and $k$.
We also present an algorithm for computing projections onto the eigenspaces of $J(n,k)$ in parallel time $O(n)$.
\end{abstract}
\noindent \textbf{Keywords}: algorithms, combinatorics, algebraic signal processing, fourier analysis.

\end{titlepage}

\section{Introduction}
The Johnson graph $J(n, k)$ has the $k$-element subsets of $\{1, \dots, n\}$ as vertices.
There is an edge between two subsets $S$ and $T$ exactly when $|S \cap T| = k-1$.
In this work, we present an efficient and highly parallelizable algorithm which computes the orthogonal eigenbasis, as described in \cite{filmus2016orthogonal}, in time $O\left(k^2 \binom{n}{k}^2\right)$. 
Most known eigenvector methods to date are iterative, and result only in numerical solutions \cite{strang2006linear}.
The eigenspaces of Johnson graphs have a number of remarkable combinatorial properties, and have found many uses both in pure and applied mathematics.
Spectral analysis, a technique pioneered by Diaconis in \cite{diaconis1988group}, is a method of analyzing data modeled on a set $X$ with a natural group action $G \to \operatorname{Aut}(X)$. Let the unitary representation $\rho$ be the extension of this action from $G$ to the free vector space on $X$ with coefficients in $\C$.
Then $\rho$ can be broken into a direct sum of irreducible representations, and each $f:X \to \C$ can be decomposed as the set of isotypic projections of $f$ onto the irreducible components.
The Johnson graph becomes important when $X$ is the $k$-element subsets of $\{1,\dots,n\}$ and $G$ is the symmetric group, where $\sigma \in S_n$ sends $S = \{s_1, s_2,\dots,s_k\}$ to $\sigma S = \{\sigma s_1, \sigma s_2, \dots, \sigma s_3\}$.
In this case, the isotypic components are exactly the eigenspaces of $J(n, k)$ \cite{orrison2001eigenspace}. 

This technique was used by Uminsky and Orrison to analyze voting coalitions in the United States Supreme court in \cite{uminsky2003generalized}, \cite{lawson2006spectral}, and by Diaconis in \cite{diaconis1989} for analyzing voting data from the American Psychological Association. 

Maslen, Orrison, and Rockmore present an efficient method for projection of a vector onto eigenspaces of real symmetric matrices in \cite{maslen2003computing} using Lanzcos iteration. 
Other efficient methods for computing such projections have been developed by \cite{diaconis1995}, \cite{kondor2011non}, and \cite{iglesias2017complexity}. 
All of these methods focus exclusively on the projection of a given vector without considering the computation of an eigenbasis. 

The Johnson graph also has important ties to coding theory due to its close relation to the Johnson association scheme.
A more detailed account of how the Johnson scheme can be used in error correcting codes can be found in \cite{delsarte2006}.

In \cite{filmus2016orthogonal}, Filmus gives an explicit description of an orthogonal eigenbasis for the eigenspaces of Johnson graphs. %
Filmus doesn't consider the algorithmic complexity of his formula, however a naive implementation does not lend to any efficient algorithm.
For $J(n, k)$, we are led to evaluate $j$ 
In this work, we derive an alternative constructive description of the same eigenbasis from Filmus's work, leading to an efficient algorithm for generating the eigenbasis.
Our algorithm runs in serial time $O(k^2 \binom{n}{k}^2)$, but leads to a highly parallelizable version which can generate an orthogonal eigenbasis and perform projections onto the eigenbasis in parallel time $O(n)$, assuming an unbounded number of processors.

\section{An Orthogonal Basis for Johnson Matrices} \label{ortho_constr_sec}
We begin with some preliminary definitions, which closely follow the terminology of \cite{filmus2016orthogonal}.
Let $[n]$ denote the set $\{1,\dots,n\}$.
Let $\mathcal{S}_{n,d}$ be the set of sequences of distinct elements in $[n]$ with length $d$.

\begin{definition}
For two sequences $A$ and $B$ in $\mathcal{S}_{n,d}$, we say that $A \prec B$ if (1) the $i$-th element of $A$ is less than the $i$-th element of $B$, and (2) $A$ and $B$ share no elements.
\end{definition}
Note that $\prec$ is a partial order on $\mathcal{S}_{n,d}$.

\begin{definition}
A sequence $B \in \mathcal{S}_{n,d}$ is a \textit{top set} if it is increasing, and there exists some $A$ such that $A \prec B$.
\end{definition} 

Let $\mathcal{B}_{n,d}$ be the set of all top sets in $\mathcal{S}_{n,d}$.
For any $A \prec B$ in $\mathcal{S}_{n, d}$, define the multivariate polynomial
\[\chi_{A,B}(x_1,\dots,x_{n}) = \prod_{i=1}^{d}(x_{A_i} - x_{B_i})\] where $A_i$ and $B_i$ are the $i$-th elements of $A$ and $B$, respectively.
Now define
\[\chi_B = \sum_{A \prec B} \chi_{A,B}.\]
and let $\mathcal{Y}_{n,d} = \{\chi_B : B \in \mathcal{B}_{n,d}\}$.
Let $\binom{[n]}{k}$ be the set of $(x_1,\dots,x_{n}) \in \{0,1\}^n$ such that $\sum_{i=1}^{n} x_i = k$.
Each column of the adjacency matrix of $J(n,k)$ is labeled with a subset $S$, but we could instead label it with the corresponding element $(x_1,\dots, x_{n})$ in $\binom{[n]}{k}$ where $x_i = 1$ if and only if $i \in S$.

There are $k+1$ eigenspaces of $J(n,k)$, with eigenvalues $(k-d)(n-k-d)-d$ for $d \in \{0,1,\dots,k\}$.
These eigenspaces are mutually orthogonal
\cite{delsarte1973algebraic}.
Let $M_d$ be the space of vectors with eigenvalue $(k-d)(n-k-d)-d$.
To construct an orthogonal basis for $M_d$, we construct a basis eigenvector for each $\chi \in \mathcal{Y}_{n,d}$.
Given such a $\chi$, there is an eigenvector whose entry associated with column $(x_1,\dots,x_n) \in \mathcal{Y}_{n,d}$ is equal to $\chi(x_1,\dots,x_{n})$.
We can compute this by setting the eigenvector coefficient in the column associated with $(x_1,\dots, x_n) \in \binom{[n]}{k}$ to $\chi(x_1,\dots,x_{n})$.

\subsection{Example for J(4,2)}
Suppose we want to compute a set of eigenvectors that form an orthogonal basis for $M_{1}$ in $J(4, 2)$.
We begin by constructing $\binom{[4]}{2} = \{(1, 1, 0, 0), (1,0,1,0), (1,0,0,1), \dots\}$.
To find a basis of this space we need to find $\mathcal{Y}_{4,1}$.
Now $\mathcal{B}_{4,1} = \{\{2\},\{3\},\{4\}\}$.
Thus:
\begin{itemize}
\item For $B = \{2\}$,  $A$ can only be $\{1\}$.
\item For $B = \{3\}$,  $A$ can be $\{1\}$ or $\{2\}$.
\item For $B= \{4\}$, $A$ can be $\{1\}$, $\{2\}$, or $\{3\}$.
\end{itemize}
We can now write each $\chi_{B}$ for $B \in \mathcal{B}_{4,2}$ as function of $x_1$, $x_2$, $x_3$, and $x_4$:

\begin{itemize}
\item $\chi_{\{1\}} = (x_1-x_2)$.
\item $\chi_{\{2\}} = (x_1-x_3)+(x_2-x_3)$.
\item $\chi_{\{3\}} = (x_1-x_4)+(x_2-x_4)+(x_3-x_4)$.
\end{itemize}

Evaluating $\chi_{B}$ for each $B$ at all entries in $\binom{[4]}{2}$ produces Table \ref{eigenbasis}, where
each row corresponds to an eigenvector and together the rows form a basis for $M_1$. 

\begin{table}[ht]
\centering
\begin{tabular}{ r|c|c|c|c|c|c| }
\multicolumn{1}{r}{}
 &  \multicolumn{1}{c}{1100}
 & \multicolumn{1}{c}{1010}
  &  \multicolumn{1}{c}{1001}
 & \multicolumn{1}{c}{0110} 
  &  \multicolumn{1}{c}{0101}
 & \multicolumn{1}{c}{0011} \\
\cline{2-7}
$\chi_{\{1\}}$ & $0$ & $1$ & $1$ & $-1$ & $-1$ & $0$ \\
\cline{2-7}
$\chi_{\{2\}}$ & $2$ & $-1$ & $1$ & $-1$ & $1$ & $-2$ \\
\cline{2-7}
$\chi_{\{3\}}$ & $2$ & $2$ & $-2$ & $2$ & $-2$ & $-2$ \\
\cline{2-7}
\end{tabular}
\caption{The eigenbasis of $M_1$ for the $J(4,2)$.}
\label{eigenbasis}
\end{table}

\subsection{Coefficients of \boldmath{$\chi_B$}} 
\begin{definition}
Let $f(x_1,\dots,x_{n})$ be a polynomial, and $S = \{s_1,\dots,s_{k}\}\subseteq [n]$.
We define $\coef{f}{S}$ as the coefficient of $x_{s_1}\cdots x_{s_{k}}$ in $f$.
\end{definition}

\begin{theorem}\label{coef}
Let $S \in [n]$ be of size $k$.
For $i \in [n]$, if $i \in S$, let $x_i = 1$ and let $x_i = 0$ otherwise. Let $B$ be a top set of size $d \leq k$.
Then
\[\chi_B(x_1,\dots,x_{n}) = \sum_{\substack{T \subseteq S \\ |T| = d}} \coef{\chi_B}{T}.\]
\end{theorem}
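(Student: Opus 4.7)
The plan is to combine two simple observations: first, that $\chi_B$ is a multilinear, homogeneous polynomial of degree $d$; and second, that evaluating a multilinear monomial $\prod_{i \in T} x_i$ at the $0/1$ indicator vector of a set $S$ returns $1$ if $T \subseteq S$ and $0$ otherwise. Together these immediately collapse the evaluation of $\chi_B$ at the indicator of $S$ into a sum of coefficients indexed by size-$d$ subsets $T \subseteq S$.

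First I would verify the structural claim about $\chi_B$. Each building block $\chi_{A,B}(x) = \prod_{i=1}^d (x_{A_i} - x_{B_i})$ is a product of $d$ linear forms in the $2d$ variables $x_{A_1}, \dots, x_{A_d}, x_{B_1}, \dots, x_{B_d}$. The definition of $A \prec B$ forces the entries of $A$ to be distinct, the entries of $B$ to be distinct, and $A$ and $B$ to share no elements, so those $2d$ variables are pairwise distinct. Hence each variable appears to degree at most $1$ in $\chi_{A,B}$, making it multilinear and homogeneous of degree exactly $d$. Summing over $A \prec B$ preserves both properties, so $\chi_B$ is multilinear and homogeneous of degree $d$ as well. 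In particular, $\coef{\chi_B}{T} = 0$ whenever $|T| \neq d$, and as a formal identity
\[\chi_B(x_1, \dots, x_n) = \sum_{\substack{T \subseteq [n] \\ |T| = d}} \coef{\chi_B}{T} \prod_{i \in T} x_i.\]

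For the second step I would substitute the indicator assignment $x_i = [i \in S]$. A multilinear monomial $\prod_{i \in T} x_i$ then evaluates to $\prod_{i \in T} [i \in S]$, which equals $1$ when $T \subseteq S$ and $0$ otherwise. Restricting the displayed sum to the surviving terms gives exactly the right-hand side of the theorem. I do not expect any genuine obstacle: the argument is essentially a bookkeeping reduction, and the only thing to keep straight is the provenance of each constraint in the final sum — the condition $|T| = d$ comes from homogeneity of $\chi_B$, while the condition $T \subseteq S$ comes from the support of the indicator substitution.
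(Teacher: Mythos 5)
Your proposal is correct and follows essentially the same route as the paper's proof: both rest on the multilinearity and degree-$d$ homogeneity of $\chi_B$ and the observation that a multilinear monomial evaluated at the indicator vector of $S$ survives exactly when its index set is contained in $S$. The only difference is that you explicitly verify the multilinearity of $\chi_{A,B}$ from the distinctness of the entries of $A$ and $B$, which the paper simply asserts with ``Recall.''
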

\begin{proof}
Recall that $\chi_B$ is multilinear and homogeneous of degree $d$.
Let $T = \{t_1,\dots,t_{d}\} \subseteq [n]$.
If $x_{t_i} = 0$ for any $t_i \in T$, then the monomial $x_{t_1}\cdots x_{t_{k}}$ will be evaluated to zero on inputs $x_1,\dots,x_{n}$.
If the monomial is not evaluated to zero, then $x_{t_i} = 1$ for all $i$, and thus each $t_i \in S$.
So $T \subseteq S$.
Since nonzero $x_i$ are 1, it follows that $\chi_B(x_1,\dots,x_{n})$ is exactly the sum of all coefficients whose monomial is nonzero on inputs $x_1,\dots,x_{n}$, which are exactly $\coef{\chi_B}{T}$ for all size $d$ subsets $T$ of $S$.
\end{proof}

\begin{corollary}
Let $S \subseteq [n]$ be of size $k$.
For $i \in [n]$ let $x_i = 1$ if $i \in S$, and $x_i = 0$ otherwise.
Let $B$ be a top set of size $k$.
Then $\chi_{B}(x_1,\dots,x_{n})$ is the coefficient of the monomial in $\chi_B$ with variables indexed by $S$.
\end{corollary}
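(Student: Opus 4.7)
The plan is to observe that this corollary is essentially the degenerate case $d = k$ of Theorem~\ref{coef}, and to make the degeneracy explicit. I would start by invoking Theorem~\ref{coef} directly, noting that its hypothesis ($B$ a top set of size $d \le k$, with $S$ of size $k$) is satisfied here with the particular choice $d = k$. The conclusion then reads
\[\chi_B(x_1,\dots,x_n) = \sum_{\substack{T \subseteq S \\ |T| = k}} \coef{\chi_B}{T}.\]

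Next I would point out that when $|T| = k = |S|$ and $T \subseteq S$, the only possibility is $T = S$. Hence the sum on the right collapses to the single term $\coef{\chi_B}{S}$. By the definition of $\coef{\cdot}{\cdot}$, this is precisely the coefficient of the monomial $x_{s_1} \cdots x_{s_k}$ in $\chi_B$, which is the unique monomial whose variable indices are exactly the elements of $S$.

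There is essentially no obstacle here; the entire content of the corollary is that the sum in Theorem~\ref{coef} degenerates when $d = k$. The only thing worth being careful about is that $\chi_B$ is homogeneous of degree $d = k$ and multilinear, so ``the monomial with variables indexed by $S$'' is unambiguous, matching the definition of $\coef{\chi_B}{S}$. A two-line proof suffices.
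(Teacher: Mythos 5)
Your proof is correct and matches the paper's in substance: the corollary is precisely the $d=k$ case of Theorem~\ref{coef}, where the only $T \subseteq S$ with $|T| = k$ is $T = S$, so the sum collapses to the single term $\coef{\chi_B}{S}$. The paper's own proof argues directly from multilinearity and homogeneity of degree $k$ rather than citing the theorem, but the underlying reasoning is identical.
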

\begin{proof}
Since $\chi_{B}$ is multilinear and homogeneous of degree $k$, it follows that the only monomial which evaluates to a nonzero number is the monomial indexed by elements in $S$. Furthermore, since the variables indexed by elements in $S$ are evaluated at 1, $\chi_B(x_1,\dots,x_{n})$ is simply the coefficient of the monomial with variables indexed by $S$.
\end{proof}

\begin{definition}
Let $U$ be a set of length $k$ and let $V$ be a sequence of length $k$. Define $U-V$ as the elements in $U$ not in $V$, and $U \cap V$ as the elements in both $U$ and $V$.
\end{definition}

\begin{theorem} \label{sgnCof}
Let $B$ be a top set of size $d$.
Let $S = \{s_1,\dots,s_{d}\}$ be a subset of $[n]$.
Then $\coef{\chi_B}{S}$ is equal to $(-1)^{|S \cap B|}$ multiplied by the number of $A \prec B$ such that $\coef{\chi_{A,B}}{A} \neq 0$.
\end{theorem}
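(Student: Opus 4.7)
The plan is to first determine $\coef{\chi_{A,B}}{S}$ for a single $A \prec B$ and then sum the result over $A$. Expanding $\chi_{A,B} = \prod_{i=1}^d (x_{A_i} - x_{B_i})$ by distributivity produces $2^d$ terms, one for each choice function $c : [d] \to [n]$ with $c(i) \in \{A_i, B_i\}$; the term contributes the monomial $\prod_i x_{c(i)}$ with sign $(-1)^{\#\{i : c(i) = B_i\}}$.

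The crucial structural observation is that $A \prec B$ forces $A \cap B = \emptyset$, and since $A$ and $B$ individually have distinct entries, the $2d$ symbols $A_1, B_1, \dots, A_d, B_d$ are pairwise distinct. Hence every element of $[n]$ lies in at most one pair $\{A_i, B_i\}$. For a fixed $d$-element set $S$, a choice function $c$ produces the monomial indexed by $S$ exactly when $\{c(1), \dots, c(d)\} = S$; by pairwise-distinctness of the pairs, this forces each pair to contain exactly one element of $S$ and uniquely determines $c$. Thus at most one term of the expansion survives, and when it does its sign $(-1)^{\#\{i : c(i) = B_i\}}$ simplifies: since $A \cap B = \emptyset$, the condition ``$c(i) = B_i$'' is equivalent to ``$B_i \in S$'', so the sign equals $(-1)^{|S \cap B|}$, \emph{independent of $A$}.

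Summing $\chi_B = \sum_{A \prec B} \chi_{A,B}$ then pulls the common factor $(-1)^{|S \cap B|}$ outside the sum, leaving exactly the number of $A \prec B$ for which a surviving term exists—equivalently, for which $\coef{\chi_{A,B}}{S} \neq 0$—which is the count appearing on the right-hand side of the theorem. The delicate step is verifying that the sign really is independent of $A$; once the pairwise-distinctness observation is in hand this is immediate from $A \cap B = \emptyset$, and I expect no substantive obstacle beyond keeping the choice-function bookkeeping tidy.
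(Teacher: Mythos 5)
Your proof is correct and follows essentially the same route as the paper's: expand $\chi_{A,B}$ into $2^d$ monomials indexed by choices from each factor $(x_{A_i}-x_{B_i})$, observe that the one matching $S$ (if any) carries sign $(-1)^{|S\cap B|}$ independently of $A$, and sum over $A \prec B$. Your explicit remark that the $2d$ symbols are pairwise distinct—so at most one choice function can yield the monomial $x_{s_1}\cdots x_{s_d}$ and no cancellation within a single $\chi_{A,B}$ is possible—is a point the paper's proof uses only implicitly, and is a welcome tightening.
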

\begin{proof}
Suppose $\coef{\chi_{A,B}}{A}$ is nonzero.
Recall that
\[\chi_{A,B} = (x_{A_1} - x_{B_1})(x_{A_2} - x_{B_2}) \cdots (x_{A_{d}} - x_{B_{d}}).\]

Any monomial in $\chi_{A,B}$ must be of the form $x_{t_1} \cdots x_{t_{d}}$ where $t_i$ is in $\{A_i, B_i\}$.
Let $T = \{t_1,\dots,t_{d}\}$.
The coefficient of this monomial is a product over $1 \leq i \leq d$ of the coefficient of $x_{t_i}$ in $x_{A_i} - x_{B_i}$, and in particular is equal to $1^{|T \cap A|}(-1)^{|T \cap B|}$.
Since $T$ must be a permutation of $S$ for $x_{t_1}\cdots x_{t_{d}}$ to equal $x_{s_1}\cdots x_{s_{d}}$, it follows that $|T \cap B| = |S \cap B|$, and so $\coef{\chi_{A,B}}{S}= (-1)^{|S \cap B|}$.
Thus for all $A \prec B$, it follows that $\coef{\chi_{A,B}}{S}$ is either 0, or $(-1)^{|S \cap B|}$.
Combined with the fact that
\[\coef{\chi_B}{S} = \sum_{A \prec B} \coef{\chi_{A,B}}{S},\]
we obtain the desired result.
\end{proof}

\begin{theorem} \label{thm:coefmag}
Let $B$ be a top set of size $k$.
Let $S = \{s_1,\dots,s_{k}\}$ be a subset of $[n]$.
Let $A \prec B$.
Then $\coef{\chi_{A,B}}{S} \neq 0$ if and only if (1) $S-B \subseteq A$ and (2) there is no pair $\alpha \in S-B$ and $\beta \in S \cap B$ such that the index of $\alpha$ in $A$ equals the index of $\beta$ in $B$. 
\end{theorem}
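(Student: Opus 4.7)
The plan is to expand $\chi_{A,B} = \prod_{i=1}^{k}(x_{A_i}-x_{B_i})$ as a sum of $2^k$ signed monomials and determine exactly when the monomial $x_{s_1}\cdots x_{s_k}$ appears. Each term in the expansion is indexed by a choice $c\colon [k]\to\{A,B\}$ which selects $x_{A_i}$ or $-x_{B_i}$ from the $i$-th factor; since $A$ consists of distinct elements, $B$ consists of distinct elements, and $A\cap B=\emptyset$ (because $A\prec B$), the resulting multiset of indices $\{c(1),\ldots,c(k)\}$ is automatically a set. By Theorem \ref{sgnCof}, every nonzero choice contributes the same sign $(-1)^{|S\cap B|}$, so $\coef{\chi_{A,B}}{S}\neq 0$ if and only if \emph{at least one} choice $c$ produces the set $S$.

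Next I would analyze what such a choice must look like by inspecting each element of $S$ separately. Fix $s\in S$. If $s\in S\cap B$, then since $A\cap B=\emptyset$ and $B$ has distinct entries, $s$ equals $B_j$ for a unique index $j$ and does not equal any $A_i$; hence the only way for $s$ to appear in the selected set is to have $c$ pick the $B$-term at position $j$. If instead $s\in S-B$, then $s$ can appear only as some $A_i$, which requires $s\in A$. Consequently, condition (1) $S-B\subseteq A$ is necessary: otherwise some element of $S$ cannot be produced at all, and no valid $c$ exists.

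Assuming (1), each element of $S$ \emph{forces} a unique position of $c$: elements of $S\cap B$ force their position in $B$, and elements of $S-B$ force their position in $A$. Two elements of $S\cap B$ cannot force the same position (entries of $B$ are distinct), and likewise for $S-B$ inside $A$. The only way a clash can occur is between an $\alpha\in S-B$ and a $\beta\in S\cap B$ whose $A$-index and $B$-index coincide, which is precisely the negation of condition (2). So if (2) fails, there is a position forced simultaneously to $A_i$ and to $B_i$, which is impossible, and no valid $c$ exists. Conversely, if (1) and (2) hold, we have $|S\cap B|+|S-B|=k$ distinct forced positions among $k$ total positions, so every position is pinned down, yielding a unique (hence existing) valid choice $c$, and therefore $\coef{\chi_{A,B}}{S}\neq 0$.

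The only mildly subtle point is checking the counting argument in this last step: one must verify that the forced positions genuinely cover all of $[k]$ rather than leaving a free position where $c$ could introduce an unwanted index outside $S$. This is the main place the bookkeeping has to be precise, but it follows immediately from the equality $|S\cap B|+|S-B|=|S|=k$ once disjointness of the two forcing mechanisms has been established via condition (2). The remainder of the argument is routine expansion of the product.
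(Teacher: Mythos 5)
Your proposal is correct and takes essentially the same route as the paper: both expand the product into signed square-free monomials, observe that each element of $S\cap B$ can only be supplied by its unique $B$-slot while each element of $S-B$ must be supplied by an $A$-slot (forcing condition (1)), and then count the $k$ forced positions against the $k$ factors to see that condition (2) is exactly what is needed for a consistent, and in fact unique, selection. The only cosmetic difference is that the paper proves the necessity of (2) by factoring $\chi_{A,B}=(x_\alpha-x_\beta)P$ and noting no monomial can contain both $x_\alpha$ and $x_\beta$, whereas you run everything uniformly through the choice function $c$; both are sound.
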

\begin{proof}
Suppose $\coef{\chi_{A,B}}{S} \neq 0$.
Then each $s_i$ must appear in either $A$ or $B$, since otherwise $x_{s_i}$ would appear nowhere in $\chi_{A,B}$.
Thus if $s_i \notin B$, it must be in $A$, proving condition (1).
Suppose that there is some pair $\alpha \in S-B$ and $\beta \in S \cap B$, where the index of $\alpha$ in $A$ is $j$ and the index of $\beta$ in $B$ is $j$.
Then $\chi_{A,B} = (x_{\alpha} - x_{\beta})P(x_1,\dots,x_n)$ where $P(x) = \prod_{i=1,i \neq j}^{d} (x_{A_i} - x_{B_i})$.
Now $P$ contains neither $x_{\alpha}$ nor $x_{\beta}$, since all elements of $A$ and $B$ are distinct from each other.
Then, neither $x_{\alpha}P(x_1,\dots,x_{n})$ nor $x_{\beta}P(x_1,\dots,x_{n})$ contain a term with both $x_\alpha$ and $x_\beta$, and so neither can their sum $\chi_{A,B}$.
This proves condition (2).

Now instead suppose that conditions (1) and (2) hold.
There are $d$ different factors $(x_{A_i} - x_{B_i})$ which divide $\chi_B$.
Since all $d$ elements of $S$ must appear as $A_i$ or $B_i$ for some $i$, and by condition (2) no two elements of $S$ can appear as $A_i$ and $B_i$ for the exact same $i$, it follows that for each $i$ exactly one of $A_i$ and $B_i$ is in $S$.
Thus by expanding $\chi_B$ into monomials, we see that $x_{s_1}\cdots x_{s_{k}}$ has a nonzero coefficient.
\end{proof}

\begin{theorem}\label{CuandoA}
For a top set $B_{n, d} = \{b_1,b_2,\dots,b_{d}\}$ there are $(b_1 - 1)(b_2 - 3)(b_3-5)\cdots(b_{d} - 2d + 1)$ sequences $A \in S_{n, d}$ such that $A \prec B$.
\end{theorem}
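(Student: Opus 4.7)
The plan is to count directly by choosing the entries of $A = (A_1, \dots, A_d)$ one at a time in order of increasing index, and to show that once any legal prefix $A_1, \dots, A_{i-1}$ has been fixed, the number of legal values for $A_i$ depends only on $i$ and $b_i$. Multiplicativity then produces the stated product.

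First I would unpack the definition of $A \prec B$ together with the fact that $A \in \mathcal{S}_{n,d}$ has distinct entries: a value $v$ is a legal choice for $A_i$ (given the previous entries) if and only if $v < b_i$, $v \notin B$, and $v \notin \{A_1, \dots, A_{i-1}\}$. So the set of legal values is
\[ \{1, 2, \dots, b_i - 1\} \setminus \bigl(B \cup \{A_1, \dots, A_{i-1}\}\bigr). \]

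The main step is to count this set. The ambient interval $\{1, \dots, b_i - 1\}$ has $b_i - 1$ elements. Since $B$ is increasing with $b_1 < \dots < b_d$, its intersection with $\{1, \dots, b_i - 1\}$ is exactly $\{b_1, \dots, b_{i-1}\}$, contributing $i - 1$ forbidden values. For each previously chosen $A_j$ with $j < i$ we have $A_j < b_j \leq b_{i-1} < b_i$, so all $i - 1$ of them lie inside the ambient interval; they are pairwise distinct (since $A$ has distinct entries) and disjoint from $\{b_1, \dots, b_{i-1}\}$ (since condition (2) of $A \prec B$ says $A \cap B = \emptyset$). Hence the two deleted families are disjoint and each has size $i - 1$, giving $(b_i - 1) - 2(i-1) = b_i - 2i + 1$ legal choices for $A_i$.

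Because this count is independent of the particular legal prefix, the product rule gives the total number of $A \prec B$ as $\prod_{i=1}^{d}(b_i - 2i + 1) = (b_1 - 1)(b_2 - 3)\cdots(b_d - 2d + 1)$, which is the claimed formula. The only delicate point is the disjointness of the two excluded families, which is precisely where both clauses of $A \prec B$ and the monotonicity of $B$ are used; everything else is routine bookkeeping, so I do not expect a serious obstacle.
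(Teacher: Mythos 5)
Your proof is correct and takes essentially the same route as the paper's: the paper phrases the count as an induction that fixes the last entry of $A$ after the first $d-1$, while you build $A$ front-to-back and invoke the product rule, but the key step in both is that exactly $2(i-1)$ forbidden values lie below $b_i$. If anything your write-up is more careful, since you explicitly verify that the two excluded families lie in $\{1,\dots,b_i-1\}$ and are disjoint, a point the paper leaves implicit.
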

\begin{proof}
We prove the statement by induction. For $d=1$, there are $b_1-1$ numbers in $[n]$ less than $b_1$.
	For $d > 0$, we choose one of $(b_1-1)(b_2-3)\cdots(b_{d-1}-2(d-1)+1)$ possibilities for the first $d-1$ elements of $A$.
	The last element can by any of the $b_{d}-1$ elements in $[n]$ which are less than $b_{d}$ and not one of the $2(d-1)$ elements previously chosen for $A$ or taken by $B$.
Thus we can choose any of $b_d-1-2(d-1) = b_d-2d+1$ options for it, yielding our formula when multiplied by the number of ways to chose the first $d-1$ elements.
\end{proof}

We have proven that the magnitude and sign of the coefficients for $\chi_B$ can be derived by the coefficients of monomials for $\chi_{A,B}$.
We have also provided formulas for calculating $\chi_B$ when $|B|=k$. 
We will use these theorems in the following algorithm to generate coefficients of $\chi_B$ in the construction of the eigenvectors.

\section{Our Algorithm}
Next we describe an algorithm for generating an orthogonal eigenbasis, along with a procedure for projecting a vector onto each $M_i$ for $0 \leq i \leq d$.
This significantly reduces runtime and will prime us for calculating each $f_i$ by projecting $f$ onto each $M_i$. In order to project $f$ we describe a method of computing eigenvectors which uses the linear mapping defined in Theorem \ref{coef}.
Consider a top set $B$ of length $d$ and a subset $S \subseteq [n]$ of size $d$.
Let $s_1 < s_2 < \dots < s_{d}$ be the elements of $S$.
The procedure described in Algorithm 1 allows one to extract the coefficient of $x_{s_1}\cdots x_{s_{d}}$ in $\chi_B$.

\begin{algorithm}[H]
\caption{Extract coefficient of $x_{s_1}\cdots x_{s_{d}}$ in $\chi_B$.}
\begin{algorithmic}[1]
\Procedure{ExtractCoefficient}{$B, S$}
   \State $i \gets 1$
   \State $j \gets 0$
   \State $\mathit{answer} \gets 1$
   \While {$i \leq d$}
		\If{$j < d$ and $B_i = s_{j+1}$}
            \State $\mathit{answer} \gets \mathit{answer} \cdot (i + j - B_i)$
            \State $i \gets i+1$
            \State $j \gets j+1$
        \ElsIf{$j = d$ or $B_i < s_j$}
            \State $\mathit{answer} \gets \mathit{answer} \cdot (j - i + 1)$
            \State $i \gets i+1$
        \Else
            \State $j \gets j+1$
        \EndIf
   \EndWhile
   \State \textbf{return} $\mathit{answer}$
\EndProcedure
\end{algorithmic}\label{extract_coef_alg}
\end{algorithm}

\subsection{Proof of Correctness}
\begin{theorem}
Let $\ell(b)$ be the number of $s_i \in S$ which are less than $b$. Then $|\coef{\chi_B}{S}| = \prod_{i=1}^{d} W_i$ where $W_i = B_i - i - \ell(B_i)$ if $B_i \in S$ and $W_i = \ell(B_i) - i + 1$ otherwise.
\end{theorem}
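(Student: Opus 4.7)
By Theorem~\ref{sgnCof}, $|\coef{\chi_B}{S}|$ equals the number of $A\prec B$ with $\coef{\chi_{A,B}}{S}\neq 0$; and Theorem~\ref{thm:coefmag} (applied with $d$ in place of $k$) characterizes such $A$ as precisely those satisfying $S-B\subseteq A$ together with the index-disjointness condition~(2). The plan is to count these sequences by partitioning the positions $[d]$ into \emph{free} positions $I_{\bar S}=\{i:B_i\notin S\}$ and \emph{constrained} positions $I_S=\{i:B_i\in S\}$, and then evaluating each part independently.

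The key observation is that $|I_{\bar S}|=d-|S\cap B|=|S-B|$. Condition~(2) prevents any element of $S-B$ from occupying a constrained position, so by condition~(1) all $|S-B|$ elements of $S-B$ must sit at the $|I_{\bar S}|$ free positions; since the cardinalities agree, this placement is forced to be a bijection. Meanwhile, each constrained $A_i$ must lie in $[B_i-1]\setminus(B\cup S)$, because $A\prec B$ forbids $A_i\in B$ and condition~(2) forbids $A_i\in S-B$. The free pool $S-B$ and the constrained pool $[n]\setminus(B\cup S)$ are disjoint, so the total count factors as a product of a ``free'' count and a ``constrained'' count.

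To evaluate each factor, I plan to process the free positions $f_1<\cdots<f_m$ and the constrained positions $c_1<\cdots<c_r$ in increasing order. At $f_j$, the admissible values are elements of $S-B$ strictly below $B_{f_j}$; since exactly $f_j-j$ of the $B$-elements below $B_{f_j}$ lie in $S$, this pool has size $\ell(B_{f_j})-(f_j-j)$, and subtracting the $j-1$ values already placed yields $\ell(B_{f_j})-f_j+1=W_{f_j}$ remaining choices. At $c_j$, a short inclusion--exclusion on $B$ and $S$ gives $|[B_{c_j}-1]\setminus(B\cup S)|=B_{c_j}-c_j-\ell(B_{c_j})+(j-1)$, so after excluding the $j-1$ earlier constrained placements, $W_{c_j}$ choices remain. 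Multiplying the per-position counts across free and constrained positions produces $\prod_{i=1}^d W_i$.

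The main obstacle will be justifying the forced bijection from $S-B$ to the free positions: it hinges on the cardinality equality $|I_{\bar S}|=|S-B|$, without which a free position could hold a value outside of $S-B$ and couple the two kinds of choices. A secondary subtlety is that the formula must yield a nonnegative integer even when no valid $A$ exists; here the admissible pools are nested along each of the sequences $f_1<\cdots<f_m$ and $c_1<\cdots<c_r$, so a standard argument shows the running counts $W_i$ never go negative once positive, and any genuine obstruction surfaces as some $W_i=0$, making the product vanish as required.
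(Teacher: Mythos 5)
Your proposal is correct and follows essentially the same route as the paper: both reduce via Theorems \ref{sgnCof} and \ref{thm:coefmag} to counting the sequences $A \prec B$ with $\coef{\chi_{A,B}}{S}\neq 0$, and both obtain each factor $W_i$ as the exact number of admissible choices for $A_i$ given the earlier choices. Your only departures are organizational --- you split the positions into free and constrained blocks instead of sweeping $i=1,\dots,d$ in a single interleaved pass, and you make explicit the forced bijection of $S-B$ onto the free positions and the degenerate case where some $W_i=0$, both of which the paper leaves implicit.
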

\begin{proof}
By Theorem \ref{thm:coefmag}, $|\coef{\chi_B}{S}|$ is the number of $A \prec B$ such that (1) $S-B \subseteq A$ and (2) there is no pair $\alpha \in S - B$ and $\beta \in S\cap B$ such that the index of $\alpha$ in $A$ equals the index of $\beta$ in $B$.

We will walk through a process with steps $1,2,\dots, d$ for constructing such an $A$ where these conditions hold.
At step $i$, we fix a value for the $i$-th element in $A$.
If $B_i$ is in $S$, then $A_i$ must satisfy all of these conditions
\begin{itemize}
    \item[I.] $A_i < B_i$
    \item[II.] $A_i \notin S$
    \item[III.] $A_i \notin B$
    \item[IV.] $A_i \neq A_j$ for any $j < i$.
\end{itemize}
There are $B_i-1$ elements initially satisfying the first condition.
We then subtract out the $\ell(B_i)$ elements which do not satisfy the second condition.
Let $w$ be the number of elements in $B$ which are less than $B_i$ and not in $S \cap B$.
Then we subtract off the $w$ elements remaining which do not satisfy the third condition.
Finally, there are $i-1-w$ elements remaining which are in $A$, but not in $S$.
	In total there are $(B_i-1) - \ell(B_i) - (w) - (i-1-w) = B_i - i - \ell(i)$ choices for $A_i$.

Now suppose that $B_i$ is not in $S$.
Then $A_i$ must be in $S$, must be less than $B_i$, and must not be in $B$.
	Thus, there are $\ell(B_i) - (i-1) = \ell(B_i) - i + 1$ possible choices in this case.
This proves the theorem.
\end{proof}

Algorithm \ref{extract_coef_alg} uses a two-pointers method to iterate over all elements in $B \cap S$ in order from least to greatest.
At the start of each iteration of the while loop, $j = \ell(B_i)$.
When an element $B_i$ in $B \cap S$ is iterated over, we multiply $\mathit{answer}$ by $i + \ell(B_i) - B_i = -(B_i - i - \ell(B_i))$.
Multiplying by this factor accounts for both the combinatorial change in number of satisfying sets $A$, as well as the sign change incurred by increasing the size of $B \cap S$ by one.
When an element in $B_i$ in $B-S$ is iterated over, we multiply $\textit{answer}$ by $\ell(B_i)-i + 1$.
At the end of the algorithm, we have that
\[
    \mathit{answer} =
    (-1)^{|B \cap S|}\prod_{i}^{d} W_i =
    (-1)^{|B \cap S|} |\coef{\chi_B}{S}| =
    \coef{\chi_B}{S}
\]
by Theorem \ref{sgnCof}.
This proves correctness of the algorithm.

\subsection{Enumerating Top Sets}
Let 
\[
    \mathcal{B}^{n,k} = \bigcup_{i=d}^k \mathcal{B}_{n,d}.
\]
In this section, we detail generation of $\mathcal{B}^{n,k}$ using a recursive method, where every recursive call corresponds to a single element of $\mathcal{B}^{n,k}$.
The generation algorithm is described below.
We use \textproc{Append}($L$, $x$) to denote the appending of item $x$ to the list $L$.

\begin{algorithm}[H]
\caption{Recursively generate all of $\mathcal{B}^{n,k}$}
\begin{algorithmic}[1]
    \Procedure{GenerateB}{$B$, $v$, $n$, $k$}
    \State \Call{Print}{$B$}
    \If{$|B| = k$}
        \textbf{return}
    \EndIf
    \State $b \gets \min\{x \in [n] : \forall i \in [d], x > B_i \}$. \label{declare_b_line}
    \If{$v = 0$}
        \State $b = b+1$
    \EndIf
    \While {$b \leq n$}
        \State $q \gets 0$
        \If {$|B| > 0$}
            \State $q \gets B_{|B|}$
        \EndIf
        \State \Call {GenerateB}{\Call{Append}{$B$, $b$}, $b - q - 2$} \label{gen_rec_call_line}
        \State $b \gets b+1$
    \EndWhile  
\EndProcedure
\end{algorithmic}\label{top_set_enum_alg}
\end{algorithm}

We now prove correctness of the algorithm.

\begin{theorem}
    In \textproc{GenerateB}, at all times $v = q(B)-2|B|$, where $q(B) = |\{x \in [n] : x \leq b \text{ for some } b \in B\}|$
\end{theorem}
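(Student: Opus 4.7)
My plan is to prove the invariant by induction on the depth of the call tree of \textproc{GenerateB}. For the base case, the top-level invocation is made with $B = \emptyset$ and $v = 0$, and since $q(\emptyset) = 0$ and $|\emptyset| = 0$, the invariant $v = q(B) - 2|B|$ holds immediately.

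For the inductive step, I would fix a call \textproc{GenerateB}($B, v, n, k$) that satisfies the invariant and examine each recursive call it launches. Reading lines~\ref{declare_b_line}--\ref{gen_rec_call_line}, the child receives first argument $B' = \textproc{Append}(B, b)$ and second argument $v' = b - q - 2$, with $b$ chosen strictly greater than every element of $B$ (which makes the optional $+1$ bump in the $v = 0$ case irrelevant to the invariant itself, though it is essential for correctness of enumeration) and $q = B_{|B|}$ when $|B| > 0$, else $q = 0$. A short auxiliary induction, run alongside the main one, will confirm that $B$ is always stored in strictly increasing order, since every \textproc{Append} uses a $b$ greater than all current entries; therefore $B_{|B|} = \max(B)$, and the algorithm's local variable $q$ coincides with $q(B)$ in all cases, including the empty-$B$ boundary.

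Since $b$ exceeds every element of $B$, I would then observe that $q(B') = b$ and $|B'| = |B| + 1$, so the target identity reduces to $v' = b - 2|B| - 2$. The core of the argument is an algebraic check: substituting the parent's invariant $q(B) = v + 2|B|$ into the formula $v' = b - q - 2$ should yield exactly $b - 2|B| - 2$. I expect this step to be the main technical obstacle, because it is where the algorithm's one-line update on line~\ref{gen_rec_call_line} has to dovetail precisely with the parent's value of $v$; the boundary case $|B| = 0$ must be handled separately, though it collapses to the direct verification $v' = b - 2 = q(\{b\}) - 2|\{b\}|$. Once this identity is established, the induction closes.
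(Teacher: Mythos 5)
Your induction framework is the same as the paper's (induct on the call tree, verify the invariant is re-established at each recursive call), and your auxiliary observations --- that $B$ is stored in increasing order, so $q(B) = B_{|B|} = \max(B)$ coincides with the algorithm's local variable $q$, and that $q(B') = b$ since $b$ exceeds every element of $B$ --- are all correct. But the step you defer as ``the main technical obstacle'' is not an obstacle you can overcome: it genuinely fails. Substituting the parent's invariant $q = q(B) = v + 2|B|$ into the passed value $v' = b - q - 2$ yields $v' = b - v - 2|B| - 2$, which equals the required $b - 2|B| - 2$ only when $v = 0$. Concretely, for $n \geq 4$ the call tree reaches $\textproc{GenerateB}((3), 1)$ (correct, since $q((3)) - 2 = 1$), which then invokes $\textproc{GenerateB}((3,4),\, 4 - 3 - 2) = \textproc{GenerateB}((3,4), -1)$, whereas the claimed invariant demands $v = q((3,4)) - 4 = 0$. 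So the invariant as stated is not maintained by the pseudocode taken literally, and no completion of your algebraic check can rescue it.

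The underlying issue is that $b - q - 2 = q(B') - q(B) - 2$ is the correct \emph{increment} $v' - v$, not the correct new value of $v$; the recursion would be consistent with the theorem if it passed $v + b - q - 2$, or equivalently $b - 2(|B|+1)$ (which needs no reference to $q$ at all). The paper's own proof derives exactly the identity $v' - v = q(B') - q(B) - 2$ and then asserts that this ``corresponds directly'' to the value passed on the recursive call, silently conflating the increment with the new value --- so the gap you would hit is inherited from the source rather than introduced by you. Still, as proposed, your proof cannot close: you must either exhibit the failing identity and correct the update rule, or restrict the claim to the calls where it happens to hold ($v = 0$), neither of which is what the theorem asserts.
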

\begin{proof}
    The length of $B$ is only zero in the base call of the recursion, so trivially $v=q(\emptyset) = 0$ if $|B| = 0$.
    When we append a value on the end of $B$ to make a new sequence $B'$, then $v'$ must equal $q(B') - 2|B'|$. Thus
    \[
        v' - v = q(B') - 2(|B|+1) - (q(B) - 2|B|) = q(B') - q(B) - 2.
    \]
    To complete the proof, note that $q(B) = B_{|B|}$ when $|B| \neq 0$.
    The variable $b$ is always larger than the largest value in $B$, thus when it is appended $q($\textproc{Append}$(B, b)) = b$.
    This corresponds directly to the value of $v$ passed into \textproc{GenerateB} on line \ref{gen_rec_call_line}.
\end{proof}

\begin{theorem}
\textproc{GenerateB}($\emptyset$, 0, $n$, $k$,) enumerates all top sets in $\mathcal{B}^{n,k}$.
\end{theorem}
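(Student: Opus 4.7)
The plan is to prove the claim by simultaneous induction on $|B|$ of two dual statements: \emph{soundness}, that every first-argument $B$ passed to a call of \textproc{GenerateB} lies in $\mathcal{B}^{n,k}$, and \emph{completeness}, that every top set in $\mathcal{B}^{n,k}$ appears as the first argument of exactly one such call. Since the procedure calls \textproc{Print} exactly once at the top of each call, these two together give the enumeration claim.

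The essential first step is to translate the top-set condition into a clean per-index inequality. By Theorem \ref{CuandoA}, the number of $A \prec B$ for an increasing $B = (b_1, \ldots, b_d)$ is $\prod_{i=1}^{d}(b_i - 2i + 1)$, so $B$ is a top set if and only if $b_i \geq 2i$ for every $i \in [d]$. Combined with the invariant $v = q(B) - 2|B|$ from the preceding theorem, this makes $v \geq 0$ (with $v = b_d - 2d$ for $d \geq 1$) precisely the assertion that the last element of $B$ meets its top-set threshold, while the value of $v$ measures the slack.

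For soundness, assume inductively that \textproc{GenerateB} is called on a top set $B$ of length $d$ with $v = b_d - 2d \geq 0$ (or $v = 0$ in the base case $d = 0$). The while loop starts $b$ at $b_d + 1$ when $v > 0$, at $b_d + 2$ when $v = 0$, or at $2$ when $d = 0$. A short case check shows that in every case $b > b_d$ and $b \geq 2(d+1)$, so $\text{Append}(B, b)$ is still a top set and the inductive hypothesis is preserved across the recursive call.

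For completeness, let $B = (b_1, \ldots, b_{d+1})$ be a top set. Its prefix $B' = (b_1, \ldots, b_d)$ automatically satisfies $b_i \geq 2i$ for $i \leq d$, so $B'$ is also a top set and by induction \textproc{GenerateB}($B'$, $v$, $n$, $k$) is called. Because $|B'| < k$ the procedure reaches the while loop and iterates $b$ from its starting value up to $n$; the delicate point is the $v = 0$ case, where the starting value is bumped to $b_d + 2$. Here $v = 0$ forces $b_d = 2d$, and the top-set requirement on $B$ gives $b_{d+1} \geq 2(d+1) = b_d + 2$, so no valid extension is skipped; in the $v > 0$ case the strict monotonicity $b_{d+1} \geq b_d + 1$ is enough. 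Thus $b = b_{d+1}$ is reached and a recursive call on $B$ occurs. Uniqueness is automatic, since each recursive call strictly extends its first argument. The main obstacle is precisely this $v = 0$ case: one must verify that the $+2$ shift in the algorithm never skips a valid $b_{d+1}$, and the identity $v = 0 \iff b_d = 2d$ is exactly what makes this argument go through.
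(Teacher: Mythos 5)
Your proof is correct and follows the same overall architecture as the paper's: induction on the length of $B$, uniqueness from the strictly increasing loop variable together with the fact that a printed sequence determines its generating prefix, and existence from identifying the smallest admissible extension of a top set. The genuine difference is in how that smallest extension is justified and tied back to the code. The paper derives the threshold $b \geq 2(|B|+1)$ by counting the elements of $A$ and \textproc{Append}($B$, $b$) directly, but it never actually verifies that the algorithm's branch ``if $v=0$ then $b \gets b+1$'' starts the loop at exactly that threshold; you close this gap by routing through Theorem \ref{CuandoA} to obtain the clean characterization that an increasing sequence is a top set if and only if $b_i \geq 2i$ for all $i$, and then using the invariant $v = q(B) - 2|B|$ to translate $v = 0$ into $b_d = 2d$, so that the $+2$ bump is precisely the jump to $2(d+1)$. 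This also buys you an explicit soundness argument (nothing outside $\mathcal{B}^{n,k}$ is ever printed), which the paper's proof asserts in its inductive statement but never argues. One small caveat on your side: Theorem \ref{CuandoA} is stated only for top sets, so the ``only if'' half of your characterization needs one extra line --- if $b_i \leq 2i - 1$, then any $A \prec B$ would force the $2i$ distinct elements $A_1,\dots,A_i,B_1,\dots,B_i$ all to lie in $\{1,\dots,b_i\}$, a set of at most $2i-1$ elements, which is impossible. With that line added your argument is complete.
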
 \label{generateb_thm}
\begin{proof}
    We prove inductively that all $B \in \mathcal{B}_{n,d}$ are printed in calls to \textproc{GenerateB} of recursion depth $d$ exactly once (where the base call has recursion depth 0), and no other sets are printed.
    When $d=0$, the single base call at recursion depth 0 prints the null sequence, the single top set of length 0.
    
    Now let $d > 0$ and suppose all elements in $\mathcal{B}_{n,d-1}$ are printed exactly once in calls of recursion depth $d-1$.
    Choose any $B' \in \mathcal{B}_{n,d}$.
    Calls of depth $d$ can only print sequences of length $d$, since with each recursive call one element is appended to the end of the $B$ variable.
    As the $B$ variable is only appended to and never changed during the recursion, if $B'$ is printed it must be printed from a call whose $B$ variable is the prefix of $B'$ excluding only the last element.
    In the loop making recursive calls to \textproc{GenerateB}, the variable $b$ is different each time.
    Thus $B'$ cannot be printed more than once.
    
    We now prove that $B'$ is printed at least once.
    Let $B$ be the prefix of $B'$ excluding only the last element, and let $b$ be the new element added.
    If appending $b$ to $B$ creates a new top set in $\mathcal{B}_{n,d}$, then so does appending $b+1$ as long as $b+1 \leq n$, since any $A \prec$ \textproc{Append}($B$, $b$) also satisfies $A \prec $ \textproc{Append}($B$, $b+1$).
    Therefore in order to call \textproc{GenerateB} on all top sets, we need find the smallest $b$ such that \textproc{Append}($B$, $b$) is a top set.
    Clearly $b$ is larger than the all elements in $B$, and this is guaranteed on line \ref{declare_b_line}.
    For \textproc{Append}($B$, $b$) to be a top set it must be the true that $b - 2(|B|+1)\geq 0$.
    If some $A \prec$ \textproc{Append}($B$, $b$) then the total elements in $A$ and \textproc{Append}($B$, $b$) combined is $2(|B|+1)$, and all of these elements are at most $b$.
    Given $B$ is a top set, we can always construct some sequence $A \prec$ \textproc{Append}($B$, $b$) by extending any sequence $\prec B$ with any of the $b-2(|B|+1)-1$ elements not in $A$ or $B'$ so far.
    This proves correctness of the generation algorithm.
\end{proof}

All operations except \textproc{Print}, and the loop take constant time, and all operations within the loops except \textproc{GenerateB} take constant time as well.
All loop iterations fix a valid element of (at least) one valid top set, so the runtime to generate all of $\mathcal{B}^{n,k}$ is absorbed into the runtime of any processing we might perform which looks at all entries of all elements in $\mathcal{B}^{n,k}$.
Since all algorithms presented in this paper do so, we ignore the runtime contribution of generating coefficients for the remainder of the paper.

\subsection{Extracting Eigenvectors}
We now have a fast procedure for calculating each $\chi_B$.
All that remains is to describe a computational procedure for extracting the eigenvector associated to a given $\chi_B$.
To do this, we use the formula given in Theorem \ref{coef}.
It takes $O\left(k\binom{n}{k}\right)$ time to generate all coefficients of $\chi_B$.
Naively applying the linear transformation to some $\chi_B$ where $|B| = d$ takes time $O(\binom{n}{k} \binom{k}{d})$. Thus, the algorithm takes order
\begin{align*}
\sum_{d=0}^{k} \left(\binom{n}{d} - \binom{n}{d-1}\right) \left(\binom{n}{k}\binom{k}{d} + \binom{n}{d}d\right)
\end{align*}
time.
At a slight cost to parallelizability, we can reduce the sequential runtime further.
Let $L_{a,b}$ with $a < b$ denote the linear transformation from $\R^{\binom{[n]}{a}}$ to $\R^{\binom{[n]}{b}}$ defined by $L(f)_S = \sum_{T \subset S, |T| = a} f_T$.
Then we would like to apply the transformation $L_{d,k}$, and to do so we will apply $L_{i,i+1}$ for every $d \leq i < k$ in sequence and scale the result.
Note that $L_{j-1,j} L_{i, j-1} = (j-i)L(j,i)$, since there are $j-i$ ways to choose an intermediate subset of size $j-1$ between any size $j$ subset and some size $i$ subset of it.
Thus, the value at each subset is counted $j-i$ times in $L_{j-1,j} L_{i, j-1}$.
We may use this recurrence to compute $L_{d,k}$ in total time
\[\sum_{r=d+1}^k r\binom{n}{r}.\]
This brings the runtime of the algorithm to
\begin{align*}
\sum_{d=0}^k \left(\binom{n}{d} - \binom{n}{d-1}\right)\sum_{r=d+1}^k r \binom{n}{r}
&\leq \sum_{d=0}^k\sum_{r=d}^k \left(\binom{n}{d} - \binom{n}{d-1}\right) r \binom{n}{r} \\
&= \sum_{r=0}^k r \binom{n}{r} \sum_{d=0}^r \left(\binom{n}{d} - \binom{n}{d-1}\right) \\
&= \sum_{r=0}^k r \binom{n}{r}^2 \\
&\leq k^2 \binom{n}{k}^2. \\
\end{align*}
in total.

\subsection{Projecting onto Subspaces}
Fix $n$, $k \leq n/2$, and $d \leq k$.
Let $f \in \R^{\binom{n}{k}}$.
We will describe a computational procedure for projecting $f$ onto the subspace $M_d$.
To project $f$ onto $M_d$, it suffices to project onto an orthogonal set of basis vectors for $M_d$ and sum the projections \cite{uminsky2003generalized}.
Let $c_B$ be a vector containing all the coefficients of $\chi_B$.
Let $e_B$ be the eigenvector associated with $B$, namely containing the results of evaluating $\chi_B(x_0,\dots,x_{n-1})$ at all $x_0,\dots,x_{n-1}$ where exactly $k$ values of $x_i$ are 1 and the remaining are 0.
In Theorem \ref{coef}, it is proven that the map $c_B \mapsto e_B$ is a linear transformation $L$.
Recall that  for $B \in \mathcal{B}_{n,d}$, the set of $e_B$ forms an orthogonal basis for $M_d$.
Let $f_d$ be the projection of $f$ onto $M_d$.
Then
\begin{align*}
f_d
&= \sum_{B \in \mathcal{B}_{n,d}} \operatorname{proj(f, e_B)}
\\ &= \sum_{B \in \mathcal{B}_{n,d}} \frac{f^T e_B}{\|e_B\|^2}e_B
\\&= \sum_{B \in \mathcal{B}_{n,d}} \frac{f^T Lc_B}{\|e_B\|^2}Lc_B
\\&= L\sum_{B \in \mathcal{B}_{n,d}} \frac{(f^T L)c_B}{\|e_B\|^2}c_B.
\end{align*}

A theorem of Filmus \cite{filmus2016orthogonal} gives an explicit formula for $\|e_B\|^2$ under a different norm. However, the norm used by Filmus differs from the one used for Johnson Graphs by exactly a factor of $\binom{n}{k}$.
Putting this information together, we obtain a formula

\[\|e_B\|^2 = \binom{n}{k}\left(\prod_{i=1}^{d}(b_i-2i+2)(b_i-2i+3)\right) \frac{k^{\underline{d}}(n-k)^{\underline{d}}}{n^{\underline{2d}}}\]
where $e_B$ is a vector containing all the coefficients of $\chi_B$, $e_B$ is the eigenvector associated with $B$, and $a^{\underline{b}} = a!/(a-b)!$.

In order to compute the projection $f_d$, the following procedure suffices.
First, compute $f^TL$.
Let $v$ be the zero vector of dimension $\binom{n}{d}$.
Iterate over all $B$, and compute both $c_B$ and $\|e_B\|^2$.
Scale $c_B$ by $\frac{(f^T L)c_B}{\|e_B\|^2}$, using our stored row vector for $f^TL$.
Then, add $c_B$ to $v$.
Once all $B$ have been iterated over, we have found $f_d = Lv$.

\section{Computational Complexity}
\subsection{Runtime of Projection}
We now look at the runtime of computing $f_d$.
While more efficient numerical methods exist for projection onto each $M_d$, the following method is both exact (non-numerical) and more efficient than simply generating an eigenbasis in $O(k^2 \binom{n}{k}^2)$ and performing projections.
The linear transformation $L$ can be represented as a matrix with $\binom{n}{k}$ rows and $\binom{n}{d}$ columns.
Each column contains exactly $\binom{k}{d}$ 1s, with the remaining entries being 0.
Using sparse matrix multiplication methods, applying $L$ to a vector takes $\binom{n}{k}\binom{k}{d}$ time, but as previously discussed, $L$ may be computed in time $\sum_{r=d}^k r\binom{n}{r}$.
The size of $\mathcal{B}_{n,d}$ is $\binom{n}{d} - \binom{n}{d-1}$ (where $\binom{n}{-1}$ is defined to be 0), as it is exactly the dimension of $M_d$ \cite{brouwer2012distance}.
For each such $B$, we compute all $\binom{n}{d}$ coefficients, which takes $O(d)$ time each. It takes an additional $O(d)$ time to compute $\frac{(f^T L)c_B}{\|e_B\|^2}$.
Finally, we take the summation of all $\frac{(f^T L)c_B}{\|e_B\|^2}$ and in order $O\begin{pmatrix}\binom{n}{k}\binom{k}{d}\end{pmatrix}$ time compute $f_d$ by applying $L$ to it.
Putting this information together, the total runtime of computing $f_d$ is of order
\[\sum_{r=d}^k r\binom{n}{r} + \left[\binom{n}{d} - \binom{n}{d-1}\right]\binom{n}{d}.\]

We now analyze the runtime of computing all such projections. Note that since $f = f_0+f_1+\cdots+f_{k-1} + f_k$, we need only explicitly compute $f_d$ for $d < k$ to obtain $f_k$ by subtracting the other $f_d$ from $f$.
Then the total runtime is of order
\begin{align*}
\sum_{d = 0}^{k-1}\sum_{r=d}^k r\binom{n}{r} + \left[\binom{n}{d} - \binom{n}{d-1}\right]\binom{n}{d}
&= \sum_{d = 0}^{k-1}\sum_{r=d}^k r\binom{n}{r} + \sum_{d = 0}^{k-1}\left[\binom{n}{d} - \binom{n}{d-1}\right]\binom{n}{d}d
\\ &\leq \sum_{r=0}^k \sum_{d = 0}^{k-1} r\binom{n}{r} + \sum_{d = 0}^{k-1}\left[\binom{n}{d} - \binom{n}{d-1}\right]\binom{n}{d}d
\\ &\leq (k-1)\sum_{r=0}^k r\binom{n}{r} + \sum_{d = 0}^{k-1}\left[\binom{n}{d} - \binom{n}{d-1}\right]\binom{n}{d}d
\\ &\leq k^3 \binom{n}{k} + \binom{n}{k-1}k\left[\binom{n}{k-1} - \binom{n}{-1}\right]
\\ &\leq k^3 \binom{n}{k}+ \binom{n}{k-1}^2 k.
\end{align*}

\subsection{Comparison of Runtime with Filmus's Construction}
We will show that a naive implementation of Filmus's description the eigenbasis, described in Section \ref{ortho_constr_sec}, has worse complexity than the algorithm presented in this paper.
We do so by finding a lower bound $L(n, k)$ for the runtime of Filmus's formula, and showing the asymptotic bound 
$\frac{L(n, k)}{k^2 \binom{n}{k}} = \Omega((3/2)^k/k)$.
The bounds presented here are very loose, and in practice the difference between our algorithm and Filmus's is even greater.
We start by proving a useful technical lemma.
We use the notation $x!!$ to denote the double factorial, defined as
\[x!! = \prod_{i=0}^{\lceil x/2-1\rceil}(n - 2i).\]

\begin{lemma}
Let $S \subseteq [n]$ be of size $2k$.
Then there are exactly $(2k-1)!!$ pairs $(A,B)$ such that $A \prec B$, $B \in \mathcal{B}_{n,k}$, and $S$ is the collection of all elements in either $A$ or $B$.
\end{lemma}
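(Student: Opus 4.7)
The plan is to establish a bijection between the valid pairs $(A,B)$ and perfect matchings on the $2k$-element set $S$, after which the count $(2k-1)!!$ falls out from the classical formula for perfect matchings.

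First I would define the forward map. Given $(A,B)$ with $A \prec B$, $B \in \mathcal{B}_{n,k}$, and $A \cup B = S$ (as unordered element sets), form the collection of pairs $\{\{A_i, B_i\} : 1 \leq i \leq k\}$. Since $A$ and $B$ together have $2k$ distinct elements (they are disjoint sequences of length $k$ each, by the definition of $\prec$), and both sequences contain no repeats, these $k$ pairs are disjoint and cover $S$, so we obtain a perfect matching of $S$. Moreover, since $A \prec B$ guarantees $A_i < B_i$, each pair has a well-defined smaller and larger element.

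Next I would define the inverse map. Given a perfect matching $M = \{\{a_1,b_1\}, \ldots, \{a_k,b_k\}\}$ of $S$ with $a_j < b_j$ in each pair, relabel so that $b_1 < b_2 < \cdots < b_k$, and set $B = (b_1,\ldots,b_k)$ and $A = (a_1,\ldots,a_k)$. Then $B$ is strictly increasing, $A$ and $B$ share no elements (the pairs of $M$ are disjoint), and $A_i < B_i$ for every $i$ (since $a_i < b_i$ inside each pair), so $A \prec B$. The existence of $A$ certifies $B$ as a top set, so $B \in \mathcal{B}_{n,k}$. The two constructions are visibly mutual inverses: sorting the pairs by the larger coordinate recovers the sequence order of $B$, and reading off $\{A_i,B_i\}$ recovers the matching.

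Finally I would invoke the standard enumeration: the number of perfect matchings on a $2k$-element set is $(2k-1)!! = (2k-1)(2k-3)\cdots 3 \cdot 1$, which is immediate by the recurrence that pairs the element of largest index with one of the remaining $2k-1$ elements and then matches the rest. The bijection then yields the claimed count.

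I expect the only delicate step is checking that the sorting-by-larger-coordinate procedure genuinely produces a sequence $A$ with $A \prec B$ for every matching, in particular that the disjointness and pointwise-strict-inequality conditions both transfer cleanly; but once the bijection is laid out, this is essentially bookkeeping, and the real content of the lemma is the identification with perfect matchings.
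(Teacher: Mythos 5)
Your proof is correct, but it takes a different route from the paper's. The paper argues by induction on $k$: it shows that the largest element of $S$ must occupy the last position of $B$ (by ruling out the alternatives using $A \prec B$ and the monotonicity of $B$), observes that any of the remaining $2k-1$ elements can serve as $A_k$, and recurses, yielding $C(k) = (2k-1)C(k-1)$. You instead exhibit an explicit bijection between the valid pairs $(A,B)$ and the perfect matchings of $S$ --- sending $(A,B)$ to $\{\{A_i,B_i\}\}$ and inverting by sorting the pairs by their larger elements --- and then quote the classical count $(2k-1)!!$ of perfect matchings. The two arguments are combinatorially the same at bottom (the standard recurrence for perfect matchings is exactly the paper's pairing of the maximal element), but your bijection is more conceptual: it explains \emph{why} the double factorial appears and makes the verification that every matching yields a legitimate top set with $A \prec B$ the only real work, which you do carry out (disjointness of the matching blocks gives disjointness of $A$ and $B$, and choosing the smaller element of each block for $A$ gives the pointwise inequalities). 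The paper's induction is more self-contained in that it does not appeal to an external enumeration, but yours is arguably cleaner and would generalize more readily.
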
 \label{pair_count_lem}
\begin{proof}
We prove the statement by induction on $|S|$.
Let $C(k)$ be the number of pairs $(A, B)$ such that $A \prec B$, $B \in \mathcal{B}_{n,k}$, and $S$ is the collection of all elements in either $A$ or $B$.
For $k=1$, the lemma says $C(1) = 1$, which is true since $|S| = 2$ and the larger element of $S$ must be in $B$ while the smaller element must be in $A$.
Now let $|S| = 2k$ where $k > 1$.
Let $s \in S$ be the largest element in $S$.
If $s = A_i$ for some $i$, then $B_i > s$, but $B_i \in S$ contradicts the maximality of $s$.
If $s = B_i$ for some $i < k$, then $B_{k} > B_i = s$, but $B_k \in S$ contradicts the maximality of $s$.
Therefore $B_k = s$.
There are $|S|-1 = 2k-1$ choices for $A_k$ at this point; take any of them.
Regardless of our choice for $A_k$, we now must construct $(A', B')$ out of the elements $S - \{A_k, B_k\}$ such that $A' \prec B'$, then append $A_k$ to $A'$ to obtain $A$ and append $B_k$ to $B'$ to obtain $B$.
Note that any such choice is valid; we are guaranteed $B_k > B_{k-1}$ since $B_k$ was maximal in $S$.
Thus by induction, $C(k) = (2k-1)C(k-1) = (2k-1)(2k-3)!! = (2k-1)!!$.
\end{proof}
\begin{corollary}
There are $\binom{n}{2k}(2k-1)!!$ pairs $(A,B)$ with $A \prec B$ and $B \in \mathcal{B}_{n,k}$.
\end{corollary}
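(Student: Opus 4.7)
The plan is to use the preceding lemma directly and exploit the fact that the set of elements appearing in $A$ or $B$ is a $2k$-subset of $[n]$, then partition the pairs by this underlying set.

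First, I would observe that whenever $A \prec B$, the definition of $\prec$ forces $A$ and $B$ to share no elements. Since $A$ and $B$ each have length $k$ and all their entries are distinct, the set $S := \{A_1,\dots,A_k\}\cup\{B_1,\dots,B_k\}$ has exactly $2k$ elements. Thus every valid pair $(A,B)$ determines a unique $S \in \binom{[n]}{2k}$.

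Next, I would partition the collection of all valid pairs according to which $S$ they produce. For a fixed $S$ of size $2k$, Lemma \ref{pair_count_lem} tells us that exactly $(2k-1)!!$ pairs $(A,B)$ with $A\prec B$ and $B \in \mathcal{B}_{n,k}$ have union equal to $S$. Summing the count $(2k-1)!!$ over the $\binom{n}{2k}$ choices of $S$ gives the desired total $\binom{n}{2k}(2k-1)!!$.

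There is no real obstacle here; the only point requiring any care is verifying that the correspondence pair $\mapsto S$ really does give a partition, i.e.\ that every valid pair yields some $S$ of size exactly $2k$ (which follows from disjointness built into $\prec$) and that the lemma's count is independent of the particular $S$ chosen (which it is, since the statement of the lemma holds for any $S \subseteq [n]$ of size $2k$). The result then follows by a one-line double counting argument.
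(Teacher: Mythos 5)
Your proposal is correct and follows essentially the same route as the paper's proof: partition the pairs $(A,B)$ by the $2k$-element set $S$ of elements appearing in $A$ or $B$, apply Lemma \ref{pair_count_lem} to each class, and sum over the $\binom{n}{2k}$ choices of $S$. Your added remark that disjointness of $A$ and $B$ (built into $\prec$) guarantees $|S|=2k$ is a small but worthwhile clarification that the paper leaves implicit.
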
 \label{pair_count_cor}
\begin{proof}
There are $\binom{n}{2k}$ subsets of $[n]$ of size $2k$.
For each $S \subseteq [n]$, Lemma \ref{pair_count_lem} states that there are $(2k-1)!!$ pairs $(A,B)$ with $A \prec B$ such that $S$ is the collection of all elements in $A$ and $B$.
Each pair $(A, B)$ with $A \prec B$ is associated to exactly one such set $S$.
Summing $(2k-1)!!$ over all $S$, we obtain the desired result.
\end{proof}

We are now ready to obtain a lower bound for the runtime of a naive implementation of Filmus's formulas for an orthogonal eigenbasis of $J(n, k)$.
In order to use Filmus's formulas, we must evaluate $\chi_{A,B}(x_1, \dots, x_n)$ over all $(x_1,\dots,x_n) \in \binom{[n]}{k}$, every $B \in \mathcal{B}^{n,k}$, and every $A \prec B$.
Since we are only attempting to find a lower bound, we may restrict $B$ to be in $\mathcal{B}_{n,k} \subset \mathcal{B}^{n,k}$ instead.
Evaluating $\chi_{A,B}$ at $(x_1,\dots,x_n) \in \binom{[n]}{k}$ takes at least $\Omega(k)$ time by looping through the product decomposition of $\chi_{A,B}$ as $\prod_{i=1}^d (A_i - B_i)$.
Each $\chi_{A,B}$ must be evaluated at $\binom{n}{k}$ different inputs.
The number of $\chi_{A,B}$ that must be evaluated is equal to the number of pairs $(A,B)$ such that $A \prec B$ and $B \in \mathcal{B}_{n,k}$, which by Corollary \ref{pair_count_cor} is equal to $\binom{n}{2k}(2k-1)!!$.
Thus the runtime of the naive implementation is $\Omega(L(n,k))$ where
\[L(n, k) = k \binom{n}{k} \binom{n}{2k}(2k-1)!!.\]

Recall that the complexity of our algorithm is $O\left(k^2 \binom{n}{k}^2\right)$.
We show that
\[\frac{L(n, k)}{k^2{\binom{n}{k}^2}} = \Omega((3/2)^k/k),\]
proving that our algorithm has better asymptotics than a naive implementation of Filmus's formulas.
\begin{align*}
\frac{L(n,k)}{k^2 \binom{n}{k}^2} &= \frac{\binom{n}{2k}(2k-1)!!}{k \binom {n}{k}} \\
&= \frac{k!(n-k)!(2k-1)!!}{k(2k)!(n-2k)!} \\
&= \frac{1}{k}\left(\prod_{r=0}^{k-1} \frac{1}{k+1+r}\right) \left(\prod_{r=0}^{k-1} n-k-r \right)\left(\prod_{r=0}^{k-1} 2r+1 \right) \\
&= \frac{1}{k}\left(\prod_{r=0}^{k-1} \frac{(n-k-r)(2r+1)}{k+1+r}\right) \\
&\geq \frac{1}{k}\left(\prod_{r=0}^{k-1} \frac{(n/2-r)(2r+1)}{n/2+1+r}\right) \\
&= \frac{1}{k}\left(\frac{(n/2-r)(2r+1)}{n/2+1+r}\right)^k.
\end{align*}
Define $f_n(r) = \frac{(n/2-r)(2r+1)}{n/2+1+r}$, and let $n \geq 8$ and $k \geq 4$.
Note that for $n \geq 2$, $f_n(0) = \frac{n/2}{n/2+1} \geq 3$ and $f_n(n/2-1) = \frac{n-1}{n} \geq \frac{1}{2}$.
Let $x = n/2-r$ and $y = 2r+1$.
Then we can write $f_n(r) = \frac{xy}{x+y}$.
If $0 < r < n/2-1$, then $x$ and $y$ are both at least 2.
The minimum value of $\frac{xy}{x+y}$ given $x,y \geq 2$ is $\frac{2 \cdot 2}{2+2} = 1$, since the derivative is positive at every coordinate.
If $1 < r < n/2 - 2$, then $x$ and $y$ are at least 3, so $f_n(r) \geq \frac{9}{6} = \frac{3}{2}$ by the same argument.
Then we have
\[\frac{1}{k}\left(\frac{(n/2-r)(2r+1)}{n/2+1+r}\right)^k \geq \frac{1}{k} \cdot 3 \cdot \frac{1}{2} \cdot 1 \cdot 1 \cdot \left(\frac{3}{2}\right)^{k-4} = \frac{3(\frac{3}{2})^{k-4}}{2k}.\]
Since the asymptotics depend only on $k$, we also allow $k \leq 3$, as it can be masked by the constant factor. 
This proves that a naive implementation of Filmus's algorithm takes almost exponentially longer in $k$ than the algorithm presented in this paper.

\subsection{Comparison with Iterative Algorithms}
We provide a non-iterative algorithm for calculating eigenvectors of the Johnson association scheme.
Standard algorithms for calculating eigenvectors of both general and real symmetric matrices require iteration toward some degree of accuracy.
As a result, the runtime of our algorithm is not directly comparable to traditional algorithms because we lack the additional component which specifies the degree of precision desired. 

We compare our runtime to the runtime for the power iteration.
This algorithm is the `state-of-the-art' method for computing eigenpairs for general matrices.
For power iteration the runtime per step is $O(n^2)$ with a linear convergence rate \cite{strang2006linear}. 
This does not include the construction of the matrix, and only considers iterations from some initial guess for eigenvalues.

\section{Parallelization}
The computation of the eigenbases for $J(n, k)$ and the projection of $f$ onto the subsequent eigenspaces is highly parallelizable.
To demonstrate this, we provide a parallelized version of both computations and analyze their runtime assuming an unbounded number of processors.
We first consider the eigenbasis extraction algorithm for $J(n, k)$.

\begin{theorem} \label{parallel_generation_thm}
    An orthogonal eigenbasis for $J(n, k)$ can be generated in $O(n)$ time given unlimited parallel processors.
\end{theorem}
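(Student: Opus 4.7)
The plan is to parallelize the three stages of the serial construction — enumerating $\mathcal{B}^{n,k}$, computing the coefficient vectors $c_B = (\coef{\chi_B}{T})_T$, and applying the linear map $L : c_B \mapsto e_B$ of Theorem \ref{coef} — so that each stage completes in depth $O(n)$ given unbounded processors. Since the depth of the whole procedure is the sum of the depths of the three stages, it then suffices to bound each one.

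For stage one I would abandon the recursive generator of Algorithm \ref{top_set_enum_alg} and instead assign one processor to every increasing sequence $B = (b_1, \ldots, b_d)$ with $d \leq k$; by Theorem \ref{CuandoA}, $B$ is a top set iff $b_i - 2i + 1 > 0$ for every $i$, a conjunction of $d \leq n$ comparisons resolvable in depth $O(\log n)$. For stage two, I would assign one processor to each pair $(B, T)$ with $|T| = |B|$ and run Algorithm \ref{extract_coef_alg} on it. Its while loop advances one of its two pointers per iteration and so terminates in at most $2|B| \leq 2n$ steps, producing every coefficient in depth $O(n)$.

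Stage three is where the main difficulty lies. For each $B \in \mathcal{B}_{n,d}$ and each $S \in \binom{[n]}{k}$, Theorem \ref{coef} gives
\[ e_B(S) = \sum_{\substack{T \subseteq S \\ |T| = d}} \coef{\chi_B}{T}, \]
a sum of $\binom{k}{d}$ already-computed scalars. With one processor per $(B, S, T)$ triple feeding a balanced binary reduction, this sum is evaluated in depth $O(\log \binom{k}{d}) \leq O(k) = O(n)$. The temptation here is to reuse the serial decomposition $L_{d,k} = L_{k-1,k} L_{k-2,k-1} \cdots L_{d,d+1}$ used for the sequential runtime analysis, but composing its $k-d$ levels with per-level fan-in reductions incurs depth $\Theta((k-d)\log n)$, which is not $O(n)$ in general; collapsing directly to the single fan-in above sidesteps this. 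Summing the three depths yields an overall parallel runtime of $O(n)$, as claimed.
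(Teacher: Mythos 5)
Your proposal is correct and follows the same three-stage skeleton as the paper's proof: enumerate the top sets, extract each coefficient via Algorithm \ref{extract_coef_alg} in $O(k)$ steps, and apply the map of Theorem \ref{coef} as a single flat fan-in per eigenvector entry rather than as the composition $L_{k-1,k}\cdots L_{d,d+1}$. Stages two and three are essentially identical to the paper's; your explicit observation that the iterated decomposition would cost depth roughly $\Theta((k-d)\log k)$ and therefore cannot be reused in the parallel setting is a point the paper leaves implicit, and it is worth making. The one place you genuinely diverge is stage one: the paper keeps the recursive generator of Algorithm \ref{top_set_enum_alg}, spawns a processor per recursive call, and bounds the depth of any root-to-leaf chain by $O(n+k)$ because the variable $b$ only increases along such a chain; you instead assign a processor to every increasing sequence of length at most $k$ and filter by the test $b_i \geq 2i$ for all $i$. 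That test is correct, but note that Theorem \ref{CuandoA} as stated only gives you one direction (if every factor $b_i - 2i + 1$ is positive then some $A \prec B$ exists); for the converse you need the one-line pigeonhole observation that $b_j \leq 2j-1$ leaves too few elements of $[b_j]$ to fill the $2j$ distinct slots $A_1,\dots,A_j,B_1,\dots,B_j$. You should also say a word about how each of the $\sum_{d\leq k}\binom{n}{d}$ processors learns which sequence it owns (a fork tree of depth $O(n)$ suffices, and is no less rigorous than the paper's ``details are omitted''). Both routes reach $O(n)$ depth; yours trades the paper's path-length argument for a larger candidate pool plus a local membership test, which is arguably easier to verify.
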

\begin{proof}
Recall that each eigenvector is described by a top set $B$ of length $d$ for $0 \leq d \leq k$.
Modify Algorithm \ref{top_set_enum_alg} so that each recursive call runs on a new processor.
The sequential runtime is the largest amount of time it takes generate any specific $B \in \mathcal{B}^{n,k}$.
Suppose we are in the body of \textproc{GenerateB} printing a top set.
We bound the amount of time it took to reach this point neglecting the time of recursive calls not currently on the stack, giving us our recursive runtime.
Note that, tracing only these recursive calls leading us to our current state, the value of the variable $b$ is always increasing.
In fact, each iteration of the inner while loop implies $b$ is increased by one, so in all, the inner loop is executed at most $n$ times to reach $B$.
There is $O(k)$ overhead from the function calls and thus generating all of $\mathcal{B}^{n,k}$ takes $O(n+k) = O(n)$ time in total.

Once all $B \in \mathcal{B}^{n,k}$ are generated, we need to generate subsets labelling the coefficients of $\chi_B$.
This is possible to do with a recursive search procedure in $O(n)$ time, where the search procedure passes down the current subset being built and a potential next element.
Details are omitted.
Once all subsets are generated, we run \textproc{ExtractCoefficient}($B$, $S$) on every pair, taking time $O(k)$.

So far, we have taken $O(n + n + k) = O(n)$ time to generate coefficients of all $\chi_B$.
All that remains is to compute the linear transformation $L$.
We compute each eigenvector entry in parallel, so take $O(n)$ time to generate all size $k$ subsets for each $B$.
The eigenvector entry associated with $S \subset n$ is a sum of the coefficients of $\chi_B$ associated to each size $|B|$ subset of $S$.
We can generate all subsets of $S$ in parallel in $O(|S|) = O(k)$ time, and sum the subsets of size $|B|$ as the threads halt and return in $O(k)$ time as well.
This procedure halts in $O(n)$ time with all eigenvectors generated.
\end{proof}

\begin{theorem}
    Given $f \in \mathbb{R}^{\binom{n}{k}}$, one can compute $\operatorname{proj}(f, M_d)$ for each $0 \leq d  \leq k$ in $O(n)$ time given unlimited parallel processors.
\end{theorem}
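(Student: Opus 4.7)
The plan is to parallelize the projection formula
\[
f_d \;=\; L \sum_{B \in \mathcal{B}_{n,d}} \frac{(f^T L) c_B}{\|e_B\|^2}\, c_B
\]
already established in Section~\ref{ortho_constr_sec}, running one processor group per $B$, per entry of each intermediate vector, and per value of $d \in \{0, 1, \dots, k\}$ simultaneously. Since our target depth is $O(n)$, each individual sub-computation only needs to fit within that budget, and there are just four serial phases that must be chained: (i) computing $f^T L$, (ii) computing each $c_B$ and $\|e_B\|^2$, (iii) summing the scaled $c_B$'s over $B$, and (iv) applying $L$ to the result.

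First I would compute $f^T L \in \mathbb{R}^{\binom{n}{d}}$ for every $d$ in parallel. Each entry $(f^T L)_T$, with $|T|=d$, is a sum of $f_S$ over all $k$-subsets $S \supset T$, which is at most $\binom{n}{k} \leq 2^n$ terms and hence reduces in a binary tree of depth $O(n)$. Next, for every $B \in \mathcal{B}^{n,k}$ I allocate its own processor group: enumerate all $d$-subsets $T$ via the parallel recursive procedure used in the proof of Theorem~\ref{parallel_generation_thm} (depth $O(n)$), then run \textproc{ExtractCoefficient}$(B,T)$ in $O(k)$ serial time on a fresh processor for each $T$, yielding the vector $c_B$. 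The norm $\|e_B\|^2$ has the closed-form expression given in Section~\ref{ortho_constr_sec} as a product of $d$ factors, which reduces in $O(\log d)$ depth.

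With $c_B$ and $\|e_B\|^2$ in hand, the scalar $(f^T L) c_B / \|e_B\|^2$ is a dot product of length $\binom{n}{d}$, again an $O(n)$-depth reduction, followed by constant-time division, and then each coordinate of $c_B$ is scaled in parallel in $O(1)$ time. Summing the resulting scaled vectors across the (at most) $\binom{n}{d}$ elements $B \in \mathcal{B}_{n,d}$ is a coordinate-wise binary reduction of depth $O(n)$. Finally, applying $L$ to the sum vector $v$ amounts to computing, for each $k$-subset $S$, the entry $(Lv)_S = \sum_{T \subset S,\,|T|=d} v_T$; this is a sum of $\binom{k}{d}$ terms, done in parallel for every $S$, in depth $O(\log \binom{k}{d}) = O(k) \subseteq O(n)$. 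Because all $k+1$ values of $d$ share the same four-phase pipeline and there is no data dependence between different $d$'s, we run them concurrently without increasing the total depth.

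The main obstacle is purely the bookkeeping that confirms each reduction tree stays within the $O(n)$ budget; the only serial chain is
phase (i) $\to$ (ii) $\to$ (iii) $\to$ (iv), each costing $O(n)$ depth, so the total is $O(n)$. Everything else -- enumeration of $\mathcal{B}^{n,k}$, enumeration of $d$-subsets, coefficient extraction via \textproc{ExtractCoefficient}, and the multiple independent reductions -- has already been shown to run within this budget in Theorem~\ref{parallel_generation_thm}, so the proof reduces to stitching those ingredients together and verifying depth additivity.
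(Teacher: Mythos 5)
Your proof is correct, but it takes a genuinely different route from the paper's. The paper's own proof stays in the eigenvector domain: it reuses the parallel eigenbasis construction of Theorem \ref{parallel_generation_thm} essentially verbatim, and immediately after each full eigenvector $e_B \in \R^{\binom{n}{k}}$ is materialized it computes $\frac{f \cdot e_B}{\|e_B\|^2} e_B$ by an $O(\log \binom{n}{k}) = O(n)$-depth dot-product reduction, then sums these projections coordinate-wise within each recursion depth $d$ to obtain $f_d$. You instead parallelize the coefficient-domain identity $f_d = L \sum_B \frac{(f^T L)c_B}{\|e_B\|^2} c_B$ from the serial projection procedure, working with the smaller vectors $c_B \in \R^{\binom{n}{d}}$ and the closed-form expression for $\|e_B\|^2$, and factoring the single application of $L$ out to the end. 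Both arguments give $O(n)$ depth, and your phase-by-phase depth accounting (each reduction tree has at most $2^n$ leaves, and the four phases chain serially) is sound. What the paper's version buys is brevity --- it is a one-line modification of an already-proved theorem. What your version buys is substantially lower total work and processor count: you never materialize the $\binom{n}{k} - \binom{n}{k-1}$ eigenvectors of dimension $\binom{n}{k}$ for the top level, and you apply $L$ once per subspace rather than once per basis vector, which is exactly the savings the paper itself exploits in its \emph{serial} projection analysis but does not carry over to the parallel setting. The only point worth making explicit in your write-up is that the closed-form norm formula is a product of $O(d)$ factors together with the falling factorials $k^{\underline{d}}$, $(n-k)^{\underline{d}}$, and $n^{\underline{2d}}$, each of which is itself a product of $O(n)$ terms, so its evaluation is an $O(\log n)$-depth reduction --- still comfortably within budget, but it should be stated rather than folded into ``$O(\log d)$ depth.''
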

\begin{proof}
    We begin with the procedure described in Theorem \ref{parallel_generation_thm}, but modify it so that directly after computing each eigenvector $e_B$, it also projects $f$ onto $e_B$.
    It takes $O(n)$ time to find $\frac{|f \cdot e_B|}{|e_B|}e_B$, since dot products and scaling vectors takes $O(\log \binom{n}{k}) = O(\log (2^n)) = O(n)$ time.
    Recall that $\mathcal{B}^{n,k}$ is generated via a recursive search, in which all top sets of length $B$ are enumerated at recursion depth $d$ by the proof of Theorem \ref{generateb_thm}.
    Once projections onto each basis vector have been generated, the projections onto each subspace can be added together.
    Each vector entry can be added in parallel time $O(\log \binom{n}{k})$, and it takes $O(\log \binom{n}{k})$ to start all of these threads, so this final phase also takes $O(n)$ time.
\end{proof}

\section{Acknowledgements}
This paper is dedicated to our advisors, Dr. David Uminsky and Dr. Mario Banuelos. The authors were supported in part by NSF Grant \#DMS-1659138, NSA Grant \#H98230-18-1-0008, and Sloan Grant \#G-2017-9876.

\newpage
\printbibliography
\end{document}